\documentclass{article}[11pt]

\usepackage{times}

\usepackage{graphicx}
\usepackage{amssymb}
\usepackage{color}
\usepackage{graphics}
\usepackage{epsfig}
\usepackage{latexsym}
\usepackage{amsmath}
\usepackage{fullpage}

\usepackage{graphicx}

\usepackage{subfigure}

%\renewcommand{\baselinestretch}{0.96}

%%REMOVEME
\newcommand {\noop}[1]{}

%%% macros go here
\newcommand{\comment}[1]{}

%%% This macro needs the package ``rotating''

%% Get .eps or other figure file from subdirectory

%% My version of captions
\newcommand{\mycaption}[2]{\caption[#1]{{\bf #1} #2}}

\setlength{\parskip}{1ex}
\setlength{\parindent}{0em}

%%%%%%%%%%% For centering table entries on decimal point %%%%%%%%%%%%%%%%%%
%\newcolumntype{.}{D{.}{.}{-1}}
%\newcolumntype{i}{D{.}{.}{2}} % useful for integers, specifying arbit #dec. places causes too wide a colum

%%%%%%%%%%%%%%%%%%%%%%%%%%%%%%%%%%%%%%%%%%%%%%%%%%%%%%%%%%%%%%%%%%%%%%%%%%%

%commented to avoid conflict with lncs.cls by susmit

\newtheorem{theorem}{Theorem}

\newtheorem{corollary}{Corollary}

%% Proof stuff
\newcommand{\qed}{\mbox{} $\Box$}
\newenvironment{proof}{{\it Proof:}}{\qed}

%%%%%% comment end with llncs.cls %%%%%

%%%%%%% General math and algos stuff %%%%%%%%%%%%%%%%%%%%%%

\newcommand{\Naturals}{\ensuremath{\mathbb{N}}}

%%%%%%%%%%%% UCLID Related %%%%%%%%%%%%%%%%%%%%%%%%%%%%%%%

%% Expressing EUF syntax
\newcommand{\bnf}{\mathrel{::=}}
\newcommand{\kw}[1]{{\bf #1}}
\newcommand{\va}[1]{{\rm {\it #1}}}
\newcommand{\mva}[1]{\mbox{\va{#1}}}
\newcommand{\syntaxform}[1]{\mva{#1}}
\newcommand{\formula}{\syntaxform{formula}}

\newcommand{\true}{\kw{true}}
\newcommand{\false}{\kw{false}}

\newcommand{\vbar}{\mathbin{|}}
\newcommand{\lvbar}{|\;}

%%%%%%%%%%%%%%%%%%%%% MISC %%%%%%%%%%%%%%%%%%%%%%%%%%%%%%%%%%%%%%%%
% Space saving List environment for enumerations.
\newcounter{myctr}
\newenvironment{mylist}{\begin{list}{\arabic{myctr}.}
{\usecounter{myctr}
\setlength{\topsep}{1mm}\setlength{\itemsep}{0.25mm}
\setlength{\parsep}{0.1mm}
\setlength{\itemindent}{0mm}\setlength{\partopsep}{0mm}
\setlength{\labelwidth}{15mm}
\setlength{\leftmargin}{4mm}}}{\end{list}}

% Space saving List environment for itemizing

%%%%%%%%%%%%%%%%%%%%% STRINGS %%%%%%%%%%%%%%%%%%%%%%%%%%%%%%%%%%%%%%%%

\newcommand{\str}{\ensuremath{s}}
\newcommand{\strConst}{\ensuremath{c}}
\newcommand{\strContains}{\ensuremath{\sqsupseteq}}
\newcommand{\strContainsAt}[1]{\ensuremath{\strContains_{#1}}}
\newcommand{\concat}{\ensuremath{@}}

\newcommand{\strExp}{\syntaxform{str-expr}}
\newcommand{\boolExp}{\syntaxform{bool-expr}}

%\newcommmand{\mkStr}[1]{{"{#1}"}}

\begin{document}

%%%%%%%%%%%%%%%%%%%%%%%%%%%%%%%%%%%%%%%%%%%%%%%%%%%%%%%%%%%%%%%%%%%%%%%%%%%%%%%%

\title{On the Computational Complexity \\of Satisfiability Solving for String Theories}

%%%%%%%%%%%%%%%%%%%%%%%%%%%%%%%%%%%%%%%%%%%%%%%%%%%%%%%%%%%%%%%%%%%%%%%%%%%%%%%%

\author{Susmit Jha \qquad Sanjit A. Seshia \qquad Rhishikesh Limaye\\
EECS Department, UC Berkeley\\
\{jha,sseshia,rhishi\}@eecs.berkeley.edu}

%%%%%%%%%%%%%%%%%%%%%%%%%%%%%%%%%%%%%%%%%%%%%%%%%%%%%%%%%%%%%%%%%%%%%%%%%%%%%%%%
\maketitle

\begin{abstract} 
Satisfiability solvers are increasingly playing a key role in 
software verification, with particularly effective use 
in the analysis of security vulnerabilities.
%In particular, it has been shown to be very useful for analysis of security vulnerabilities in a program.
%These vulnerabilities often manifested as corner-case behaviors, and 
%hence, require a more organized and exhaustive analysis than traditional random 
%testing of programs. 
String processing is a key part of many software applications,
such as browsers and web servers.
% that can receive malicious data over the network. 
These applications are susceptible to attacks through malicious data 
received over network. Automated tools for analyzing the security
of such applications, thus need to reason about strings.
For efficiency reasons, it is desirable to have a solver that treats
strings as first-class types. In this paper, we present some theories
of strings that are useful in a software security context and analyze the
computational complexity of the presented theories. We use this 
complexity analysis to motivate a byte-blast approach which employs a
Boolean encoding of the string constraints to a corresponding 
Boolean satisfiability problem.
\end{abstract}

%%%%%%%%%%%%%%%%%%%%%%%%%%%%%%%%%%%%%%%%%%%%%%%%%%%%%%%%%%%%%%%%%%%%%%%%%%%%%%%%

\section{Introduction}

Many security-critical applications such as Web servers routinely
process strings as an essential part of their functionality. They take
strings as inputs, screen them using filters, manipulate them and use
them for operations such as database queries. It is pertinent to
verify that these programs do not have vulnerabilities which can be
used to compromise system security. Verification and structured
testing techniques to validate security of such applications often
rely on using constraint solvers. The frequent use of string
operations in these applications has motivated several groups to
explore the possibility of designing a constraint solver which treats
strings as first-class types. Such a specialized solver for strings
would further facilitate the use of constraint solving for analysis of
security applications with string operations.  

Software applications use various string predicates and functions
which are often made available to the developers as libraries. A
satisfiability solver for string constraints must be able to handle
these predicates and functions. From the string constraints and
predicates available in high level programming languages such as C,
JAVA and C++, we identify a set of core predicates and functions. Many
other more complicated string-manipulating functions can be expressed
as some simple composition of these functions. We use these predicates
and functions to define a theory of strings.  

The main contribution of this paper is an analysis of the complexity of several
fragments of the theory of strings. We show that fairly small and
simple-looking fragments are NP-complete. In light of the progress in SAT
solving and SMT solving for bit-vector arithmetic, these results indicate
that a SAT-based approach is reasonable to satisfiability solving of string
constraints.

%%%%%%%%%%%%%%%%%%%%%%%%%%%%%%%%%%%%%%%%%%%%%%%%%%%%%%%%%%%%%%%%%%%%%%%%%%%%%%%%

\section{Related Work}

Constraint solvers are widely used in verification and validation of software and hardware
systems~\cite{bryant-tacas07,gulwani-pldi08,tiwari-cade07}. In particular, they have been used
extensively for both static~\cite{mihai-issp05,brumley-csf07,brumley-usenix07} 
and dynamic analysis~\cite{egele-usenix07,godefroid-ndss08} of programs to detect malcious 
code or
 security
vulnerabilities in benign code. 
The use of constraint solving in software verification is driven by
development of faster and more scalable SMT solvers 
for the theory of bit-vectors such as
BAT~\cite{manolios-cav07}, 
Boolector~\cite{brummayer-tacas09}, Beaver~\cite{beaver-link}
MathSat~\cite{bruttomesso-cav07},
 Spear~\cite{hutter-fmcad07}, STP~\cite{ganesh-cav07},
 UCLID~\cite{bryant-tacas07} and Z3~\cite{z3}. In particular,
UCLID and STP have been successfully used for security applications. 
For example,
bit-vector solvers can be used to easily detect
overflow/underflow errors which are cause of many
security vulnerabilities such as buffer overflow~\cite{molnar-catchconvtr07}. 

Analysis of string processing software is an important 
problem~\cite{wassermann-pldi07,ruan-tase08,wassermann-issta08}.
This makes it essential to develop verification techniques
that can efficiently handle constraints over strings.
 A scalable approach for solving
string constraints must treat strings as first class types and 
string library functions
as native operations of the theory of strings~\cite{juan-tr09}.
Development of such a solver for a theory
of strings would further facilitate the use of constraint solving 
for program analysis, in general,
and security applications, in particular. This will further push the 
frontier of program
analysis in terms of scalability as well as program complexity.

While previous efforts have been made to develop decision 
procedures for regular expression
containment~\cite{hooimeijer-pldi09,christensen-SAS03}, there have been
some recent efforts to develop
 an SMT solver for the theory of
strings. 

In an independent and parallel work,
Kiezun et al~\cite{hampi} 
have developed a solver (HAMPI) for a theory of strings.
HAMPI works by reducing the formulae over string constraints to
bit-vector logic and then, using a bit-vector solver (STP) for
checking the satisfiability of the formulae. This reduction
is achieved in two steps. HAMPI reduces the string constraints
specified using a rich input language to a core theory of strings
comprising of regular language operations and membership predicate.
The string constraints in this core language are then translated
to bit-vector logic before invoking a bit-vector solver. They also
show that the satisfiability problem for this theory of strings with
regular expression operations is NP-complete.

The string theory considered in this paper is different from the
one considered in HAMPI. The string functions and predicates in our
theory of strings are motivated by commonly used library functions in
high level languages such as C and Java. The set of constraints expressible
in our theory of strings are not comparable with the set of constraints
expressible in HAMPI. We identify constraints which can be expressed in
our theory and not in HAMPI as well as those which can be expressed in HAMPI
but not in our theory. 
\begin{mylist}
\item Our theory has \emph{contains-at-position-i} predicate which is true 
if and
only if its first argument string is contained in its second argument string
at exactly position i. We also have \emph{extract-i-j} function which 
extracts a sub-string from its string argument using the indices i and j.
While the SMT solving approach of HAMPI can be used to handle these
constraints, the theory of strings considerd in HAMPI is based on regular
languages and can not be used to encode these constraints.
\item Our theory does not have union or star operation and hence, constraints
with union or star can not be expressed in our theory.
\end{mylist}

In particular, we note that the NP-completeness result established in 
Kiezun~\cite{hampi} relies on the use of union operation to provide
disjunction. \emph{We show that even without this union operation, the theory
of strings is NP-complete and all non-trivial fragments of the theory of
strings are also NP-complete.}

Bjorner et al~\cite{bjorner-tr08}
 propose another approach to solving string constraints
arising out of path feasibility queries. Their approach relies on identifying
candidate string lengths by solving length constraints and then, solving the
string constraints by considering them to be of lengths found in the first
phase. The string lengths found in the first phase may not provide a solution
even if the formulae is satisfiable and hence, they need to iterate with
different length assignments. The string operations considered in this
work are similar to ones proposed here. We consider strings of bounded lengths
and we do not consider the \emph{replace} operation. Hence, our fragment
of string theory is decidable. \emph{In contrast to Bjorner et al's work who
presented decidability result for theory of strings, we present
complexity results on the theory of strings and its different fragments.}

\section{Theory Definition}
\label{sec:theorydef}

The definition of the theory of strings presented in this section is motivated
by checking path feasibility queries over programs written in some high
level languages such as Java, C and Ocaml. The string libraries used in these
high level languages are abstracted as string functions and predicates.
We now define the complete theory of strings using these predicates and functions in this section. Later, we will analyze the complexity of different fragments of this theory by considering different subset of string predicates and functions. 

\begin{figure}[h]
\begin{eqnarray*}
\strExp & \bnf  & \strConst \vbar \str \vbar \strExp [i:j] \vbar \strExp_1 \concat \strExp_2 \\
&&  \\
%%\\[.5ex]
\boolExp & \bnf & \true \vbar \false \vbar \neg \boolExp \\
&& \lvbar \strExp_1 = \strExp_2  \vbar  \strExp_1 \strContains \strExp_2 \vbar \strExp_1 \strContainsAt{i} \strExp_2 \\
&& \\
\formula & \bnf & \boolExp \vbar \boolExp \land \formula \\
&& \\
&& i,j \in \Naturals \hspace*{5mm} \str, \str_i \text{~are string variables~} \hspace*{5mm} \strConst \text{~represents a string constant.} \\
\end{eqnarray*}
\mycaption{Syntax for String Logic}{$[i:j]$ denotes extraction of the sub-string starting at position $i$ and
ending at position $j$; $\concat$ denotes concatenation; $\strContains$ 
denotes containment; and 
$\strContainsAt{i}$ denotes containment at position $i$.}
\label{fig:gram}
\end{figure}

The syntax of the statements in theory of strings is given by grammar in Figure~\ref{fig:gram}. The strings are over some finite alphabet $\Sigma$. The string constraints arising from software verification involve only finite length strings. The length of a string is bounded by the length of the corresponding buffer. So, we require that the maximum length of each string is bounded by a constant. Also, the maximum length of all
strings are less than some constant $L_{max}$. Also, there is an empty string constant $\epsilon$. We describe the semantics of the predicates and functions used in the theory definition below.

%\vspace{-0.5cm}

\textbf{String Predicates:} The string predicates take two string arguments and evaluate to true or false.

\begin{mylist}

\item Equality: $s_i = s_j$ is true if and only if both $s_i$ and $s_j$ are assigned the same string constants and otherwise, it is false.

\item Containment at position $i$: $s_1 \strContainsAt{i} s_2$ denotes that $s_2$ is contained in $s_1$ at position $i$. For example, $bombay$ contains $bay$ at location $4$. So, $bombay \strContainsAt{4} bay$ evaluates to true.

\item Containment: $s_1 \strContains s_2$ denotes that $s_2$ is contained in $s_1$ at some position. In particular, the empty string $\epsilon$ is contained in all strings and does not contain any non-empty string, that is, $\forall s, s \strContains \epsilon$ and $\forall s, s \not = \epsilon \Rightarrow \neg (\epsilon \strContains s)$.

\end{mylist}

%\vspace{-0.5cm}

\textbf{String Functions:} The two string functions considered in this paper are extraction and concatenation. 

\begin{mylist}

\item Extraction: $s [i:j]$ has the type signature $\strExp \times int \times int \rightarrow \strExp $. It denotes the substring of $s$ starting from position $i$ and ending at position $j$ where $i$ and $j$ are integers. For example, $bombay [4:6]$ evaluates to $bay$.\\

\item Concatenation: $s_1 \concat s_2$ has the type signature $\strExp \times \strExp \rightarrow \strExp$. It denotes the concatenation of the two strings provided to it as arguments. For example, $bom \concat bay$ evaluates to $bombay$.  \\

\end{mylist}

\vspace{-1cm}

\section{Complexity Results}

Before
stating and proving the complexity results, we present a brief summary
of the results in this section and note that all non-trivial fragments 
of theory of strings are NP-complete. In Theorem~\ref{th-np}, we 
show that the satisfiability problem for the theory of strings as define in
Section~\ref{sec:theorydef} is in NP. 
 This is a direct consequence of having a constant bound on the size of any string.
Hence, the 
satisfiability of any fragment of string theory is also in NP. Each fragment of theory 
of strings is defined by selecting some string predicates and functions along with Boolean 
negation and conjunction. As discussed in Section~\ref{sec:theorydef}, there are two 
functions and three predicates. To define a fragment of string theory we need to include 
atleast one string predicate. 

The three most elementary fragment of string theory are defined by including exactly one 
string predicate.
\begin{mylist}
\item E: This fragment consists of string equality, Boolean negation and conjunction.
\item C: This fragment consists of string containment, Boolean negation and conjunction.
\item T: This fragment consists of string containment at position $i$, Boolean negation 
and conjunction.
\end{mylist}
It is shown that the satisfiability problem for C fragment is NP-complete in 
Theorem~\ref{th-cont}. The satisfiability problem for T fragment is also NP-complete as
shown in Lemma~\ref{th-cat}. 
We know that E fragment is polynomial-time solvable using congruence
closure~\cite{barrett-smtbookch09}. 
So, we extend the E (equality) fragment with different string predicates and functions, 
and analyze its complexity. 
\begin{mylist}
\item E+C: This fragment extends E with string containment. 
\item E+T: This fragment extends E with string containment at position $i$. 
If $i$ is only allowed to be constant, the corresponding logic is E+T-CONST.
\item E+A: This fragment extends E with string concat function.
\item E+X: This fragment extends E with string extract function. 
If the indices for extract are only constant, the corresponding logic is E+X-CONST.
\end{mylist}

Since the satisfiability problems for C and T fragments are NP-complete, it 
is natural that E+C and E+T would also be NP-complete. We have separately 
proved the hardness results for both 
fragments in Theorem~\ref{th-eqcont} and Theorem~\ref{th-eqcat}. 
It is also shown in Theorem~\ref{th-eqconc} and Theorem~\ref{th-eqext} 
that the satisfiability problem for E+A and E+X fragments  are also NP complete.

Any extension of these fragments would also be NP-complete. So, the NP-completeness 
results for these minimal fragments of string theory presented in this section imply 
that the satisfiability problem for all fragments of string theory except for the E (equality) 
fragment is computationally 
hard. Thus, it is unlikely that there is any polynomial time algorithm for deciding the 
satisfiability of any non-trivial fragment of string theory unless P=NP.

\noop{
\begin{mylist}
\item equality and containment - EQCONT fragment of string theory
\item equality and containment at constant position - EQCAT-CONST fragment of 
string theory
\item equality and concatenation - EQCONC fragment of string theory
\item equality and extraction with constant indices - EQEXT-CONST fragment of 
string theory 
\end{mylist}
}

In the rest of the section, we state and prove the complexity results.

\begin{theorem} \label{th-np}
The satisfiability problem over the theory of string is in NP.
\end{theorem}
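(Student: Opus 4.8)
The plan is to exhibit a polynomial-size certificate and a polynomial-time verifier, which is the standard route to membership in NP. Given a formula $\formula$ over the theory of strings, the certificate will simply be an assignment of a concrete string constant to each string variable $\str$ appearing in $\formula$. The crucial observation—already flagged in the surrounding text as ``a direct consequence of having a constant bound on the size of any string''—is that every string is bounded in length by $L_{max}$. Hence each assigned string can be written down using at most $L_{max} \cdot \lceil \log_2 |\Sigma| \rceil$ bits, so an assignment to all $n$ variables occupies $O(n \cdot L_{max} \cdot \log |\Sigma|)$ space, which is polynomial in the size of the input. First I would argue that this certificate is polynomially bounded; then I would argue that checking it is polynomial-time.

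For the verification step I would proceed by structural induction over the grammar in Figure~\ref{fig:gram}. Given a full assignment, every $\strExp$ evaluates to a concrete string: a constant $\strConst$ or variable $\str$ is read directly, extraction $\strExp[i:j]$ is computed by copying the relevant substring, and concatenation $\strExp_1 \concat \strExp_2$ by appending—both computable in time linear in the (bounded) string lengths, and the results remain bounded by $L_{max}$ provided we reject any intermediate result that would exceed the length bound. Each $\boolExp$ is then decidable in polynomial time: string equality is a character-by-character comparison, $\strContains$ is a substring search (na\"ively $O(L_{max}^2)$), and $\strContainsAt{i}$ is a single aligned comparison at position $i$. Finally $\formula$ is a conjunction of such boolean expressions, so the whole formula is evaluated by taking the conjunction of the individual truth values, in total polynomial time. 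If the formula evaluates to \true, the verifier accepts.

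Putting the two pieces together establishes that $\formula$ is satisfiable if and only if there exists a polynomially-sized certificate (the satisfying assignment) that the polynomial-time verifier accepts, which is precisely the definition of membership in NP. I expect the argument itself to be routine; the only subtlety worth stating carefully is the bound on string lengths, since without the constant bound $L_{max}$ the intermediate results of concatenation could grow and the certificate could fail to be polynomial. I would therefore make explicit that the length bound is what keeps every evaluated subexpression—and thus the certificate—of polynomial size, and that any operation producing an over-length string is simply treated as unsatisfiable under the given assignment.
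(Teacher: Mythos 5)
Your proof is correct and takes essentially the same approach as the paper: the certificate is the satisfying assignment itself, whose size is at most $L_{max}N$ for $N$ string variables and hence polynomial in the input size, precisely because of the constant bound $L_{max}$ on string lengths. If anything, your write-up is more complete than the paper's, which only argues that the certificate is polynomially sized and leaves the polynomial-time verification step (your structural-induction evaluation of string and boolean expressions) entirely implicit.
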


\begin{proof}
If the formula over theory of strings is satisfiable, then the satisfying instance is an assignment of string variables to strings with lengths upper bounded by the constant $L_{max}$. Hence, the size of the satisfying assignment is at most $L_{max}N$ where $N$ is the number of string variables. So, the length of the certificate is polynomial in the size of the input and hence, satisfiability of formula in theory of strings is in NP.
%%\qed
\end{proof}

\vspace{0.5cm}

As a consequence of the above theorem, the satisfiability of formulae in 
smaller fragments of theory of strings such as  E+C, E+T-CONST, E+A and E+X-CONST 
is also in NP. Hence, we only require to show that satisfiability of formulae in these 
fragments is NP-hard in order to prove that the satisfiability problem for these fragments 
is NP-complete. 

In rest of the section, we state and prove the NP-hardness results for each of these fragments. We show that the satisfiability problem for different fragments of string theory is NP-hard. Let us consider a 3-CNF formula $\phi$ over a set $X = \{x_1,x_2,\ldots,x_n\}$
 of $n$ Boolean variables.
$$ \phi \equiv \displaystyle \bigwedge_{i=1}^m (l_1^i \vee l_2^i \vee l_3^i)$$
where each literal $l^j_i$ is $x_k$ or $\neg x_k$ for some $x_k \in X$. We know that 3-CNF-SAT is NP-complete~\cite{cook-stoc71}. We now reduce this problem,
that is, finding an assignment of variables in
$X$ to $\{0,1\}$ such that $\phi$ evaluates to $1$, to the problem of finding
a satisfying assignment in the corresponding fragment of theory of strings.\\

\subsection{Equality + Containment (E+C)}

\begin{theorem} \label{th-eqcont}

The satisfiability problem over the theory of strings
with equality, contains, Boolean negation and conjunction (E+C fragment) 
is NP-hard.

\end{theorem}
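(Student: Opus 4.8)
The plan is to reduce 3-CNF-SAT (shown NP-complete in~\cite{cook-stoc71}) to satisfiability in E+C, using the instance $\phi=\bigwedge_{i=1}^m (l_1^i\vee l_2^i\vee l_3^i)$ fixed above. The guiding observation is structural: by the grammar of Figure~\ref{fig:gram}, an E+C formula is nothing more than a conjunction of atoms $\strExp_1=\strExp_2$ and $\strExp_1\strContains\strExp_2$ together with their negations --- there is no disjunction, and (since E+C excludes the extract and concat functions) no way to build compound strings. Consequently the whole reduction must synthesize the disjunctive content of $\phi$ from the single source of nondeterminism that $\strContains$ provides: $s\strContains c$ holds as soon as $c$ occurs at \emph{some} position of $s$, an existential over positions that I will use as a controlled ``or''.

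\emph{Variable gadget.} Over the alphabet $\Sigma=\{\mathsf{t}_k,\mathsf{f}_k : 1\le k\le n\}\cup\{\#\}$ I represent each Boolean $x_k$ by one string variable $a_k$ and force $a_k\in\{\mathsf{t}_k,\mathsf{f}_k\}$ with the three conjuncts $\mathsf{t}_k\mathsf{f}_k\strContains a_k$, $\neg(a_k=\epsilon)$, and $\neg(a_k=\mathsf{t}_k\mathsf{f}_k)$. The first says $a_k$ is a substring of the two-symbol constant $\mathsf{t}_k\mathsf{f}_k$, whose substrings are exactly $\epsilon,\mathsf{t}_k,\mathsf{f}_k,\mathsf{t}_k\mathsf{f}_k$; the two inequalities prune the extremes, leaving precisely the intended truth markers. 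Because the single variable $a_k$ is reused in every clause that mentions $x_k$, consistency of the assignment across clauses is automatic.

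\emph{Clause gadget.} The difficulty is that a clause is a three-way \emph{or} over three \emph{distinct} variables, whereas a single containment atom can mention only two strings and cannot glue $a_a,a_b,a_c$ together. I resolve this with a fresh variable $G_i$ per clause and a clause-specific constant $T_i$ that lists, separated by the marker $\#$, the seven local assignments to $(x_a,x_b,x_c)$ that satisfy the clause, each written as a length-three block using the markers $\mathsf{t},\mathsf{f}$ with the correct polarities. I then assert $T_i\strContains G_i$, $\neg(G_i\strContains \#)$, and $G_i\strContains a_a$, $G_i\strContains a_b$, $G_i\strContains a_c$. The second conjunct keeps $G_i$ inside a single block; the three containments force that block to carry the chosen marker of each of the three variables, which --- since each block holds exactly one marker per variable --- pins $G_i$ to the block equal to the current assignment's restriction to $(x_a,x_b,x_c)$. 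Such a block exists in $T_i$ iff that restriction is one of the seven clause-satisfying local assignments; the all-false restriction is deliberately omitted, so this conjunct is satisfiable exactly when the clause is satisfied.

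Correctness then follows in both directions: a satisfying Boolean assignment yields string values ($a_k$ the matching marker, $G_i$ the matching block) meeting every conjunct, and conversely any model forces each $a_k\in\{\mathsf{t}_k,\mathsf{f}_k\}$ and each clause to be satisfied. The construction is clearly polynomial --- $O(n)$ variable gadgets and, per clause, a constant-length $T_i$ and six atoms --- and, by Theorem~\ref{th-np}, E+C is in NP, so this establishes NP-hardness. I expect the main obstacle to be exactly the clause gadget: assembling an \emph{ordered} triple of three separate variables' values, and thereby turning a disjunction into the membership test above, with neither a concatenation operator nor a native disjunction available; the enumerated constant $T_i$ together with the ``no $\#$'' constraint is the device that overcomes it.
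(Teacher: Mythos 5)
Your reduction is correct, and it is genuinely different from the paper's --- in fact the difference is load-bearing, because the paper's own clause gadget does not work, and your construction repairs precisely the point where it fails. The paper also reduces from 3-CNF-SAT, but over the unary alphabet $\Sigma=\{a\}$: each Boolean variable $x_i$ becomes two string variables $s_i,r_i$ ranging over $\{\epsilon,a\}$ with the consistency constraint $s_i\neq r_i$, and each clause $l_1\vee l_2\vee l_3$ becomes the four conjuncts $V_c\strContains\psi(l_1)$, $V_c\strContains\psi(l_2)$, $V_c\strContains\psi(l_3)$, $V_c\neq\epsilon$, with the claim that these force some $\psi(l_j)$ to equal $a$. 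They do not: over a one-letter alphabet, $u\strContains v$ holds exactly when $|u|\geq|v|$, so setting $V_c=a$ satisfies all four conjuncts regardless of how the literals are assigned (every string contains $\epsilon$, and $a\strContains a$). Hence every 3-CNF instance, satisfiable or not, is mapped to a satisfiable string formula, and the backward direction of the paper's reduction fails. Your diagnosis --- that the only nondeterminism $\strContains$ offers is an existential over positions, and that a sound reduction must make this existential range exactly over the admissible ways of satisfying a clause --- is the right one, and your gadget implements it soundly: the constant $T_i$ enumerates the seven satisfying local assignments as $\#$-separated blocks, $\neg(G_i\strContains\#)$ confines $G_i$ to a single block, and the three containments $G_i\strContains a_a$, $G_i\strContains a_b$, $G_i\strContains a_c$ pin that block to the restriction of the global assignment, since each block carries exactly one marker per variable. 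What the paper's intended approach would have bought, had it worked, is a one-letter alphabet and constant-size gadgets; what yours buys is an argument that actually goes through.

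Two caveats you should close. First, your alphabet $\{\mathsf{t}_k,\mathsf{f}_k : 1\le k\le n\}\cup\{\#\}$ grows with $n$, whereas the paper treats $\Sigma$ as a fixed finite alphabet (its gadgets use one or two letters); you should either state explicitly that the alphabet is supplied as part of the input, or re-encode the markers over a fixed alphabet, taking care that multi-character markers do not create spurious substring matches across marker and block boundaries. Second, clauses with repeated literals or with fewer than three distinct variables need the obvious degenerate version of $T_i$. Neither issue is deep. Finally, it is worth recording that an even shorter sound proof exists in this fragment: over $\Sigma=\{a\}$, the constraint $aa\strContains s_v$ restricts $|s_v|$ to $\{0,1,2\}$ and $\neg(s_u=s_v)$ expresses $|s_u|\neq|s_v|$, so graph 3-colorability reduces directly to E+C; either that reduction or yours is a valid replacement for the paper's argument.
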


\begin{proof}
We prove this by reducing 3-CNF-SAT to E+C fragment of theory of strings. We describe a transformation that maps a 3-CNF Boolean formula to
a formula in E+C fragment of theory of strings (over the alphabet $\Sigma = \{a\}$) such that there is a satisfying
assignment for the Boolean formula if and only if there is a satisfying
assignment for the formula over strings.\\

Let $\psi$ be defined as 
$$ \psi(x_i) \triangleq s_i \; and \; \psi(\neg x_i) \triangleq r_i$$ 
where $s_i$ and $r_i$ are strings of atmost length $1$.
$s_i = a$ if and only if $x_i$ is assigned true, otherwise, it is $\epsilon$.
Similarly, $r_i = a$ if and only if $x_i$ is assigned false,
otherwise it is $\epsilon$. So, for any literal $l$, $\psi(l)$ would be
$a$ if and only if $l$ is assigned true.

%Consider the fragment containing only of ``contains at'' (\strContainsAt[i])
%and equality operators, with negations allowed. This fragment is also
%NP hard.

We also need to add
constraints to ensure consistency, that is, exactly one of $x_i$ or
$\neg x_i$ is assigned true. For consistency, for each variable
$x_i$, we must have the constraint
$$s_i \not = r_i$$
This ensures that exactly one of $s_i$ or $r_i$ is $a$.

Each clause $c \equiv l_1 \vee l_2 \vee l_3$ is transformed to
$$V_c \strContains \psi(l_1)$$
$$V_c \strContains \psi(l_2)$$
$$V_c \strContains \psi(l_3)$$
$$V_c \not = \epsilon$$
where $V_c$ is a new string variable for clause c and is of length atmost $3$.

Thus, atleast one of $\psi(l_i)$ must
be $a$ which is possible if and only $l_i$ is assigned true. So,
atleast one literal in each clause is true.

A set of string constraints $\psi(\phi)$ 
is obtained by applying the above transformations
to each clause $c_i$ in 3-CNF Boolean formula $\phi$ 
and taking the union of all the obtained string constraints.

Let $\mathcal I$ be a satisfying assignment to $\phi$ such that
$\mathcal I(x)$ denotes the assignment to $x$. By construction, there is an
assignment $\mathcal I'$ to $\psi(\phi)$ such that $\mathcal I'(s_i) = a$ and
$\mathcal I'(r_i) = \epsilon$ if and only if $\mathcal I(x_i) = true$.

Thus, E+C fragment of string theory is NP-hard.
%%\qed
\end{proof}

\begin{corollary}

The satisfiability problem over the theory of strings
with equality, contains at $i$ where $i$ is variable, Boolean negation and conjunction (E+T-VAR fragment) is NP-hard.

\end{corollary}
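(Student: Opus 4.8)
The plan is to derive this directly from Theorem~\ref{th-eqcont} by observing that the containment predicate is exactly the projection of containment-at-position over the position argument: for any strings, $\strExp_1 \strContains \strExp_2$ holds if and only if there exists a position $i \in \Naturals$ such that $\strExp_1 \strContainsAt{i} \strExp_2$ holds. Since the E+T-VAR fragment permits the index $i$ to be a variable, and a formula in these fragments is satisfiable precisely when its free variables admit a satisfying assignment, a free position variable behaves as an existential quantifier over positions. This is exactly the semantics needed to simulate $\strContains$ using $\strContainsAt{i}$.

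First I would fix the polynomial-time transformation. Take any formula $\Phi$ in the E+C fragment. Leave every equality and disequality atom, together with all of the Boolean structure, unchanged, and replace each containment atom $\strExp_1 \strContains \strExp_2$ occurring in $\Phi$ by $\strExp_1 \strContainsAt{i} \strExp_2$, where $i$ is a fresh position variable introduced for that particular occurrence. The result $\Phi'$ is a formula of the E+T-VAR fragment, and the transformation clearly runs in linear time, introducing one new variable per containment atom.

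Next I would establish that $\Phi$ and $\Phi'$ are equisatisfiable. For the forward direction, given a satisfying assignment $\mathcal I$ of $\Phi$, each satisfied containment atom $\strExp_1 \strContains \strExp_2$ is witnessed by some position $p$ at which $\strExp_2$ sits inside $\strExp_1$; extending $\mathcal I$ by assigning the corresponding fresh variable $i$ the value $p$ satisfies $\strExp_1 \strContainsAt{i} \strExp_2$, and hence all of $\Phi'$. For the converse, any assignment satisfying $\strExp_1 \strContainsAt{i} \strExp_2$ places $\strExp_2$ inside $\strExp_1$ at the concrete position named by $i$, which in particular means $\strExp_2$ is contained in $\strExp_1$ at some position, so $\strExp_1 \strContains \strExp_2$ holds. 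The equality atoms and Boolean connectives are untouched, so satisfaction transfers atom by atom. Combining this equisatisfiability with the NP-hardness of E+C from Theorem~\ref{th-eqcont} yields NP-hardness of E+T-VAR.

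The main point requiring care is the treatment of the empty string, where the semantics of the two predicates must be matched exactly. In particular one must confirm that $\strExp_1 \strContainsAt{i} \epsilon$ is realizable at some legal position whenever $\strExp_1 \strContains \epsilon$ holds (which is always the case), and that no legal position lets $\epsilon$ contain a non-empty string, mirroring the stated axioms for $\strContains$. Once the range of admissible positions is pinned down so that ``contained at position $i$ for some $i$'' coincides with ``contained at some position,'' the equivalence $\strExp_1 \strContains \strExp_2 \iff \exists i.\,\strExp_1 \strContainsAt{i} \strExp_2$ is immediate and the reduction goes through.
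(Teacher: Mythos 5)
Your overall strategy is the same as the paper's: the paper's proof of this corollary is exactly the observation that $str_1 \strContains str_2$ can be rewritten as $str_1 \strContainsAt{i} str_2$ with a fresh index variable $i$, turning the E+C reduction of Theorem~\ref{th-eqcont} into an E+T-VAR reduction. However, you go further and assert a general equisatisfiability lemma for \emph{arbitrary} E+C formulas, and that stronger claim has a genuine hole.

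The problem is polarity. E+C includes Boolean negation, and your claim that ``satisfaction transfers atom by atom'' fails for containment atoms occurring under a negation. For positive occurrences you correctly use the implication $\strExp_1 \strContainsAt{i} \strExp_2 \Rightarrow \strExp_1 \strContains \strExp_2$; but for a negated atom the converse direction of your argument needs $\strExp_1 \strContains \strExp_2 \Rightarrow \strExp_1 \strContainsAt{i} \strExp_2$ for the particular value assigned to $i$, which is false: an assignment can satisfy $\neg(\strExp_1 \strContainsAt{i} \strExp_2)$ simply by choosing $i$ to be a position where $\strExp_2$ does not occur, even though it occurs elsewhere in $\strExp_1$. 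Concretely, $(s_1 = ab) \wedge (s_2 = b) \wedge \neg(s_1 \strContains s_2)$ is unsatisfiable, but its translation $(s_1 = ab) \wedge (s_2 = b) \wedge \neg(s_1 \strContainsAt{i} s_2)$ is satisfiable by taking $i=1$, since $b$ occurs in $ab$ only at position $2$. So your transformation is not satisfiability-preserving on all of E+C, and the ``equisatisfiable'' lemma as you state it is false.

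The gap is easily repaired, and the repair is what the corollary actually needs: NP-hardness does not require translating every E+C formula, only the formulas produced by the reduction in Theorem~\ref{th-eqcont}. In those formulas, negation is applied only to equality atoms ($s_i \not= r_i$ and $V_c \not= \epsilon$), while every containment atom $V_c \strContains \psi(l_j)$ occurs positively. Restricted to such formulas, both directions of your argument are sound, and composing with the 3-CNF-SAT reduction gives NP-hardness of E+T-VAR. You should either phrase the reduction as this composition or restrict your equisatisfiability claim to formulas in which $\strContains$ occurs only under an even number of negations; as written, the blanket claim is the step that would fail under scrutiny.
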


\begin{proof}
$str_1 \strContains str_2$ can be rewritten as $str_1 \strContainsAt i str_2$ where $i$ is a new index variable. Hence, any formula in E+C can be expressed
as a formula in E+T-VAR fragment.
%%\qed
\end{proof}

\subsection{Equality + Containment-aT-Constant (E+T-CONST)}

\begin{theorem} \label{th-eqcat}

The satisfiability problem over the theory of strings
with contains at constant position, equality, Boolean negation and conjunction (E+T-CONST fragment) 
is NP-hard.

\end{theorem}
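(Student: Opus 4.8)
The plan is to reduce 3-CNF-SAT to the E+T-CONST fragment, reusing the spirit of the variable gadget from Theorem~\ref{th-eqcont} but replacing general containment (which is unavailable here) with containment at fixed constant positions. The essential difficulty is that the logic has no disjunction whatsoever: a formula is just a conjunction of (possibly negated) atomic predicates, so each clause's disjunction must be forced through the semantics of a conjunction of atomic constraints, and now every position I mention must be a literal constant. I would work over a two-symbol alphabet $\Sigma = \{a, b\}$; the second symbol is precisely what makes the positional encoding go through, since a \emph{false} literal must still occupy its slot rather than collapse to $\epsilon$ (this is why the single-symbol alphabet of Theorem~\ref{th-eqcont} is insufficient here).

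\emph{Variable gadget.} For each $x_i$ I introduce string variables $u_i$ and $\bar u_i$ meant to hold the truth values of $x_i$ and $\neg x_i$, with $a$ meaning true and $b$ meaning false. To pin each to a single symbol I assert $u_i \neq \epsilon$ together with $\neg(u_i \strContainsAt{2} a)$ and $\neg(u_i \strContainsAt{2} b)$: forbidding any character at the second position forces $|u_i| = 1$, and over $\{a,b\}$ this gives $u_i \in \{a,b\}$. The same three constraints are imposed on $\bar u_i$, and I add $u_i \neq \bar u_i$. Since both are length-one strings over a two-letter alphabet, the disequality forces them to be distinct symbols, so exactly one of them is $a$; setting $\psi(x_i) = u_i$ and $\psi(\neg x_i) = \bar u_i$ then makes $\psi(l)$ equal to $a$ exactly when the literal $l$ is true.

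\emph{Clause gadget.} The key trick is to simulate concatenation using containment at constant positions. For a clause $c \equiv l_1 \vee l_2 \vee l_3$ I introduce a fresh $V_c$ and assert $V_c \strContainsAt{1} \psi(l_1)$, $V_c \strContainsAt{2} \psi(l_2)$, and $V_c \strContainsAt{3} \psi(l_3)$; because each $\psi(l_j)$ has length one, this fixes the first three characters of $V_c$ to be exactly the literal values and forces $|V_c| \geq 3$. I cap the length with $\neg(V_c \strContainsAt{4} a)$ and $\neg(V_c \strContainsAt{4} b)$, yielding $|V_c| = 3$, so $V_c$ is precisely the string $\psi(l_1)\psi(l_2)\psi(l_3) \in \{a,b\}^3$. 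Finally I add the single disequality $V_c \neq bbb$. Since $V_c = bbb$ holds exactly when all three literals are false, $V_c \neq bbb$ holds exactly when at least one literal is true, which is how the clause's disjunction gets captured by one atomic constraint and its negation.

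\emph{Correctness and conclusion.} The reduction $\psi(\phi)$ is the conjunction of all variable and clause gadgets and has size linear in $\phi$. Given a satisfying assignment $\mathcal{I}$ of $\phi$, setting each $u_i$ to $a$ or $b$ according to $\mathcal{I}(x_i)$, each $\bar u_i$ to its complement, and each $V_c$ to the induced three-symbol string satisfies every constraint. Conversely, any model of $\psi(\phi)$ has each $u_i \in \{a,b\}$; reading $x_i$ as true iff $u_i = a$ yields a Boolean assignment, and the constraint $V_c \neq bbb$ in each clause gadget guarantees that some literal of that clause is true, so $\phi$ is satisfied. Hence 3-CNF-SAT reduces to E+T-CONST, and with membership in NP (Theorem~\ref{th-np}) the fragment is NP-complete. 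I expect the main obstacle to be the faithful positional assembly: getting both length bounds exactly right, the $\geq$ from the positive containments and the $\leq$ from forbidding the next position, so that the comparison against the constant $bbb$ genuinely tests ``all literals false'' and the slots cannot drift.
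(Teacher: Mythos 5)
Your reduction is correct and essentially identical to the paper's: the same 3-CNF-SAT encoding over $\{a,b\}$ with length-one string variables for literals, a disequality $u_i \neq \bar u_i$ for consistency, containments at constant positions $1,2,3$ into $V_c$, and $V_c \neq bbb$ to capture the clause disjunction. The only (minor, sound) difference is that you enforce the length caps explicitly via negated containments at the next position, whereas the paper simply declares each string variable to have a bounded length, which its theory of bounded-length strings permits.
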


\begin{proof}
We prove this by reducing 3-CNF-SAT to E+T-CONST fragment of theory of strings. We describe a transformation that maps a 3-CNF Boolean formula to
a formula in E+T-CONST fragment of theory of strings (over the alphabet $\Sigma = \{a,b\}$) such that there is a satisfying
assignment for the Boolean formula if and only if there is a satisfying
assignment for the formula over strings.\\

Let $\psi$ be defined as 
$$ \psi(x_i) \triangleq s_i \; and \; \psi(\neg x_i) \triangleq r_i$$ 
where $s_i$ and $r_i$ are strings of atmost length $1$. To make it
exactly of length $1$, we  require
$s_i \not = \epsilon \wedge r_i \not = \epsilon$.
$s_i = a$ if and only if $x_i$ is assigned true, otherwise, it is $b$.
Similarly, $r_i = a$ if and only if $x_i$ is assigned false,
otherwise it is $b$. So, for any literal $l$, $\psi(l)$ would be
$a$ if and only if $l$ is assigned true.

%Consider the fragment containing only of ``contains at'' (\strContainsAt[i])
%and equality operators, with negations allowed. This fragment is also
%NP hard.

We also need to add
constraints to ensure consistency, that is, exactly one of $x_i$ or
$\neg x_i$ is assigned true. For consistency, for each variable
$x_i$, we must have the constraint
$$s_i \not = r_i$$
This ensures that exactly one of $s_i$ or $r_i$ is $a$.

Each clause $c \equiv l_1 \vee l_2 \vee l_3$ is transformed to
$$V_c \strContainsAt 1 \psi(l_1)$$
$$V_c \strContainsAt 2 \psi(l_2)$$
$$V_c \strContainsAt 3 \psi(l_3)$$
$$V_c \not = bbb$$
where $V_c$ is a new variable for clause c and is of length atmost $3$.

Thus, atleast one of $\psi(l_i)$ must
be of $a$ which is possible if and only if $l_i$ is assigned true. So,
atleast one literal in each clause is true.

A set of string constraints $\psi(\phi)$ 
is obtained by applying the above transformations
to each clause $c_i$ in 3-CNF Boolean formula $\phi$ 
and taking the union of all the obtained string constraints.

Let $\mathcal I$ be a satisfying assignment to $\phi$ such that
$\mathcal I(x)$ denotes the assignment to $x$. By construction, there is an
assignment $\mathcal I'$ to $\psi(\phi)$ such that $\mathcal I'(s_i) = a$ and
$\mathcal I'(r_i) = b$ if and only if $\mathcal I(x_i) = true$.

Thus, E+T-CONST fragment of string theory is NP-hard.
%%\qed
\end{proof}

\begin{corollary} \label{th-cat}

The satisfiability problem over the theory of strings
with contains at constant position, Boolean negation and conjunction 
%(CAT-CONST fragment TO CHECK) 
is NP-hard.

\end{corollary}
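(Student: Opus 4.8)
The plan is to reduce 3-CNF-SAT, reusing the structure of the reduction in Theorem~\ref{th-eqcat} but eliminating every appeal to equality. Since satisfiability over the whole theory is in NP (Theorem~\ref{th-np}), it suffices to establish NP-hardness. Note first that hardness does \emph{not} follow formally from Theorem~\ref{th-eqcat}, since the present fragment drops the equality predicate and is a strict subfragment of E+T-CONST. The single observation that drives the construction is that disequality from a fixed string of known length is expressible with containment-at-a-constant and negation: if $u$ is constrained to have the same length as a constant $w$, then $u \neq w$ holds exactly when $w$ does not occur in $u$ at position~$1$, i.e. when $\neg(u \strContainsAt{1} w)$. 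This lets us mimic the clause constraint $V_c \neq bbb$ of Theorem~\ref{th-eqcat} without the equality predicate.

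The only genuine obstacle is the non-emptiness requirement $s_i \neq \epsilon$, which Theorem~\ref{th-eqcat} uses to pin each literal string to length exactly~$1$. Over $\{a,b\}$ with an upper length bound of~$1$, ``non-empty'' means $s_i \in \{a,b\}$, a disjunction; and the logic permits only conjunctions of possibly-negated atoms (there is no disjunction, and no conjunction may sit under a negation), so this cannot be written as a single negated containment atom. I would remove this obstacle by re-encoding: represent each Boolean variable $x_i$ by a length-$2$ string $s_i$ over $\{a,b\}$ of the form \emph{payload then marker}, with $s_i \in \{aa, ba\}$ and the payload (first symbol) equal to $a$ iff $x_i$ is true. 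The marker is pinned by the single positive atom $s_i \strContainsAt{2} a$, which simultaneously forces $|s_i| \ge 2$ and hence yields non-emptiness for free, while $\neg(s_i \strContainsAt{3} a) \wedge \neg(s_i \strContainsAt{3} b)$ caps the length at~$2$; together these force $s_i \in \{aa, ba\}$ without any equality.

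For a clause $c \equiv l_1 \vee l_2 \vee l_3$ over variables $x_{i_1}, x_{i_2}, x_{i_3}$ I would introduce a collector variable $V_c$ together with the atoms $V_c \strContainsAt{1} s_{i_1}$, $V_c \strContainsAt{3} s_{i_2}$, $V_c \strContainsAt{5} s_{i_3}$, which copy the three raw variable strings into $V_c$ so that its payload positions $1,3,5$ hold the three variable symbols. Rather than introduce a complementary string to carry literal polarity (the consistency gadget of Theorem~\ref{th-eqcat}), I handle polarity inside a single forbidden constant: let $P_c$ be the length-$6$ string whose marker positions $2,4,6$ are $a$ and whose payload position $2k-1$ holds the symbol that makes $l_k$ false (namely $b$ if $l_k$ is positive and $a$ if $l_k$ is negative). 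Then $\neg(V_c \strContainsAt{1} P_c)$ asserts exactly that the three literals are not simultaneously false, i.e. that the clause is satisfied, and no separate consistency constraint is needed.

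Taking the conjunction of all variable and clause constraints produces an instance with $O(n+m)$ containment-at-constant literals, so the reduction is polynomial. Soundness and completeness are then routine: a satisfying Boolean assignment yields the obvious strings $s_i \in \{aa,ba\}$ and $V_c = s_{i_1} s_{i_2} s_{i_3}$, and conversely the variable constraints force each $s_i \in \{aa, ba\}$, from which reading $x_i$ true iff $s_i = aa$ recovers a satisfying assignment, since each clause constraint forbids precisely the all-false pattern. The step I expect to require the most care is packing the clause's disjunction into the \emph{single} negated atom $\neg(V_c \strContainsAt{1} P_c)$: making the payload/marker layout, the per-literal polarity of $P_c$, and the length bookkeeping line up correctly is where the construction must be verified, whereas everything downstream is mechanical.
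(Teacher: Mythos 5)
Your proof is correct, but it takes a genuinely different route from the paper's. The paper proves this corollary in two lines by reusing the E+T-CONST reduction of Theorem~\ref{th-eqcat} verbatim and replacing each disequality between equal-length strings by a negated containment-at-1 atom: $s_i \neq r_i$ becomes $\neg(s_i \strContainsAt{1} r_i)$ and $V_c \neq bbb$ becomes $\neg(V_c \strContainsAt{1} bbbb)$ (apparently a typo for $bbb$, since a length-4 constant can never occur in a string of length at most 3), keeping the two-strings-per-variable consistency gadget. You instead build a fresh reduction: one string per Boolean variable with a payload/marker layout, length pinned to exactly $2$ by $s_i \strContainsAt{2} a$ together with $\neg(s_i \strContainsAt{3} a) \wedge \neg(s_i \strContainsAt{3} b)$, and each clause's disjunction packed into the single negated atom $\neg(V_c \strContainsAt{1} P_c)$ with literal polarity folded into the constant $P_c$; this checks out, since the marker positions of $V_c[1..6]$ and $P_c$ always agree, so equality fails exactly when some payload differs, i.e., some literal is true. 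Your construction is longer but arguably more watertight: the paper's two-line proof silently leaves the non-emptiness constraints $s_i \neq \epsilon$, $r_i \neq \epsilon$ of Theorem~\ref{th-eqcat} unaccounted for, and those constraints are genuinely needed (if a literal string may be $\epsilon$, its clause atoms are vacuously satisfied and soundness breaks); they must be replaced by something like $\neg(\epsilon \strContainsAt{1} s_i)$, the trick the paper itself uses in the C-fragment proof of Theorem~\ref{th-cont}, whereas your marker atoms yield non-emptiness for free and dispense with the consistency gadget entirely. One quibble: your claim that non-emptiness ``cannot be written as a single negated containment atom'' overlooks exactly that $\epsilon$-on-the-left trick, so your detour is a matter of convenience (it avoids relying on the convention for containment at a position within the empty string) rather than necessity.
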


\begin{proof}
In the proof above, we can replace $s_i \not = r_i$ 
by $\neg(s_i \strContainsAt 1 r_i)$ and $V_c \not = bbb$ by
$\neg(V_c \strContainsAt 1 bbbb)$. The NP-hardness proof still 
goes through. Dis-equality
between the strings of same length
can be expressed as dis-containment-at-1.
%%\qed
\end{proof}

\subsection{Equality + concAt (E+A)}

\begin{theorem} \label{th-eqconc}

The satisfiability problem over the theory of strings
with equality, concat and Boolean conjunction (E+A fragment) is NP-hard.

\end{theorem}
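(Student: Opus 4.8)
The plan is to reduce 3-CNF-SAT to the E+A fragment, reusing the literal encoding $\psi$ from the preceding proofs but exploiting concatenation rather than containment. The key simplification is to work over the unary alphabet $\Sigma = \{a\}$: every string is then a power of $a$, concatenation acts as addition of lengths, and string equality becomes equality of lengths. I would encode each propositional variable $x_i$ by two strings $s_i = \psi(x_i)$ and $r_i = \psi(\neg x_i)$, each intended to be $a$ when the corresponding literal is true and $\epsilon$ when it is false.

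For consistency I would impose, for every variable $x_i$, the single equation $s_i \concat r_i = a$. Over the unary alphabet this forces $|s_i| + |r_i| = 1$, so exactly one of $s_i, r_i$ equals $a$ and the other $\epsilon$; crucially this uses only equality and concatenation. For a clause $c \equiv l_1 \vee l_2 \vee l_3$ I want to assert that at least one literal string is nonempty, i.e.\ that $\psi(l_1)\concat\psi(l_2)\concat\psi(l_3)$ has length $\geq 1$. Since the number of true literals can be $1$, $2$, or $3$, I cannot pin it to an exact length; instead I would introduce a fresh slack variable $d_c$ and write $\psi(l_1)\concat\psi(l_2)\concat\psi(l_3) = a \concat d_c$. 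Reading off lengths, with $p_i = |\psi(l_i)| \in \{0,1\}$, this says $p_1+p_2+p_3 = 1 + |d_c| \geq 1$, and $d_c$ can absorb any surplus, so the equation is satisfiable precisely when at least one literal is true. The whole formula $\psi(\phi)$ is the conjunction of these equations, and note it uses no negation at all.

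Correctness then splits into the two usual directions. Given a satisfying Boolean assignment $\mathcal{I}$, set $s_i,r_i \in \{a,\epsilon\}$ according to $\mathcal{I}(x_i)$ and choose each $d_c = a^{k-1}$, where $k \geq 1$ is the number of true literals in clause $c$; every equation holds. Conversely, from a satisfying string assignment the consistency equations determine a Boolean value $\mathcal{I}(x_i) := (s_i = a)$, and each clause equation forces at least one literal string to be $a$, hence that literal, and therefore the clause, to be true. All strings have length at most $2$, so the bound $L_{max}$ is respected, and the transformation is clearly polynomial in the size of $\phi$.

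I expect the only real obstacle to be expressing the clause's disjunction, an inherently ``at least one'' constraint, using only equality and concatenation. The slack-variable identity $\psi(l_1)\concat\psi(l_2)\concat\psi(l_3) = a \concat d_c$ is the crux: it converts the inequality ``length $\geq 1$'' into a single equation by letting a fresh, existentially-quantified variable swallow the excess, and working in unary is what makes this arithmetic-of-lengths reasoning valid. If one instead wished to mirror the earlier proofs and allow disequality, the clause could more directly be written $\psi(l_1)\concat\psi(l_2)\concat\psi(l_3) \neq \epsilon$, but the slack encoding is preferable since it establishes hardness already for the purely positive fragment.
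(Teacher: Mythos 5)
Your proof is correct and takes essentially the same route as the paper: a reduction from 3-CNF-SAT over the unary alphabet $\{a\}$, the same literal encoding, the same consistency equation $s_i \concat r_i = a$, and a slack variable to turn each clause's ``at least one true literal'' into a single concatenation equation. The only cosmetic difference is that the paper places the slack on the left and equates to a constant, writing $\psi(l_1) \concat \psi(l_2) \concat \psi(l_3) \concat p_c = aaa$ with $|p_c| \leq 2$, whereas you place it on the right as $\psi(l_1) \concat \psi(l_2) \concat \psi(l_3) = a \concat d_c$; both formulations force the literal strings to have total length at least one, and both are negation-free.
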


\begin{proof}
We prove this by reducing 3-CNF-SAT to E+A fragment of theory of strings.
We describe a transformation that maps a 3-CNF Boolean formula to
a formula in E+A fragment of theory of strings 
(over the alphabet $\Sigma = \{a\}$) such that there is a satisfying
assignment for the Boolean formula if and only if there is a satisfying
assignment for the formula over strings.\\
Let $\psi$ be defined as 
$$ \psi(x_i) \triangleq s_i \; and \; \psi(\neg x_i) \triangleq r_i$$ 
where $s_i$ and $r_i$ are strings of atmost length $1$.
$s_i = a$ if and only if $x_i$ is assigned true, otherwise, it is $\epsilon$.
Similarly, $r_i = a$ if and only if $x_i$ is assigned false,
otherwise it is $\epsilon$. So, for any literal $l$, $\psi(l)$ would be
$a$ if and only if $l$ is assigned true. 

We also need to add
constraints to ensure consistency, that is, exactly one of $x_i$ or
$\neg x_i$ is assigned true. For consistency, for each variable
$x_i$, we must have the constraint
$$s_i \concat r_i = a$$
This ensures that exactly one of $s_i$ or $r_i$ is $a$, that is,
exactly one of $\psi(x_i)$ or $\psi(\neg x_i)$ is $a$.

Each clause $l_1 \vee l_2 \vee l_3$ is transformed to
$$\psi(l_1) \concat \psi(l_2) \concat \psi(l_3)   \concat p_i =
aaa$$ 
where $p_i$ is of length atmost $2$.
Thus, the sum of the lengths of $\psi(l_1)$, $\psi(l_2)$ and
$\psi(l_3)$ must be atleast 1, that is, atleast one of $\psi(l_i)$ must
be $a$ which is possible if and only $l_i$ 
is assigned true. So,
atleast one literal in each clause is true.

A set of string constraints $\psi(\phi)$ 
is obtained by applying the above transformations
to each clause $c_i$ in 3-CNF Boolean formula $\phi$ 
and taking the union of all the obtained string constraints.

Let $\mathcal I$ be a satisfying assignment to $\phi$ such that
$\mathcal I(x)$ denotes the assignment to $x$. By construction, there is an
assignment $\mathcal I'$ to $\psi(\phi)$ such that $\mathcal I'(s_i) = a$ and
$\mathcal I'(r_i) = \epsilon$ if and only if $\mathcal I(x_i) = true$.

Thus, E+A fragment of string theory is NP-hard.
%%\qed

\end{proof}

\noop{
As a corollary of the above theorem, we obtain the following
result.

\begin{corollary}

The satisfiability problem over the theory of strings with equality and
concat (without Boolean conjunction) is NP-hard.

\end{corollary}

\begin{proof}

Introduce a new letter $b$ in the alphabet $\Sigma$ and use concat with 
the new letter $b$ as separators to replace conjunctions. 
$str_1 = str_2 \wedge str_3 = str_4$ can be replaced by a single constraint 
$str_1 \concat b \concat str_3 = str_2 \concat b \concat str_4$ where $b$ 
is a new letter. This is possible because we do not have any negation in
this fragment.

%%\qed

\end{proof}
}

\begin{corollary}

The satisfiability problem over the theory of strings
with contains ($\strContains$) and concat (C+A fragment) is NP-hard.

\end{corollary}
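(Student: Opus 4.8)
The plan is to obtain this as a direct corollary of Theorem~\ref{th-eqconc} (the E+A result) by showing that the only non-concatenation primitive used in that reduction --- string equality --- can be simulated using containment alone. Concretely, I would reduce 3-CNF-SAT to C+A by taking the exact string formula $\psi(\phi)$ produced in the proof of Theorem~\ref{th-eqconc} and rewriting each of its equality atoms into an equivalent conjunction of two containment atoms, leaving the concatenation structure untouched.

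The key observation I would establish first is that for any two string expressions $\alpha$ and $\beta$,
$$\alpha = \beta \quad\text{if and only if}\quad (\alpha \strContains \beta) \wedge (\beta \strContains \alpha).$$
The forward direction is immediate, since every string contains itself. For the converse, note that $\alpha \strContains \beta$ forces $\beta$ to be a contiguous substring of $\alpha$, so $|\beta| \le |\alpha|$, and symmetrically $|\alpha| \le |\beta|$; hence $|\alpha| = |\beta|$, and the only substring of $\alpha$ of length $|\alpha|$ is $\alpha$ itself, giving $\alpha = \beta$. The empty-string corner cases are consistent with this: by the semantics in Section~\ref{sec:theorydef}, $\epsilon \strContains s$ holds only when $s = \epsilon$, so mutual containment involving $\epsilon$ again forces equality. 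This equivalence holds over every alphabet, and in particular over the unary alphabet $\Sigma = \{a\}$ used in the E+A reduction.

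With this in hand, I would replace the consistency constraint $s_i \concat r_i = a$ by $(s_i \concat r_i \strContains a) \wedge (a \strContains s_i \concat r_i)$, and each clause constraint $\psi(l_1)\concat\psi(l_2)\concat\psi(l_3)\concat p_i = aaa$ by the corresponding pair of containments. Since the E+A reduction uses only equality, concatenation, and Boolean conjunction --- and never negation --- the rewritten formula uses only containment, concatenation, and conjunction, i.e.\ it lies in the C+A fragment; it also grows only by a constant factor, so the reduction remains polynomial. By the equivalence above, the rewritten formula is satisfiable exactly when $\psi(\phi)$ is, which by Theorem~\ref{th-eqconc} is exactly when $\phi$ is satisfiable. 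Hence C+A is NP-hard (and, together with Theorem~\ref{th-np}, NP-complete).

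I expect the only point requiring care --- the ``main obstacle,'' such as it is --- to be the justification that mutual containment coincides with equality in all cases, including the empty-string semantics, together with the bookkeeping check that no step reintroduces negation or disequality. As a sanity check on an alternative route, one could instead argue directly that over $\{a\}$ containment is just length comparison, so $X \strContains a$ expresses $|X| \ge 1$ (at least one literal true in a clause) while $a \strContains s_i \concat r_i$ expresses $|s_i| + |r_i| \le 1$; pairing the latter with $s_i \concat r_i \strContains a$ forces exactly one of $s_i, r_i$ to equal $a$, recovering the same reduction without ever invoking equality.
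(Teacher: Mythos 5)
Your proposal is correct and takes essentially the same route as the paper: the paper's proof is the one-line observation that equality can be expressed as two-way containment and that the E+A reduction (Theorem~\ref{th-eqconc}) uses no negation, which is exactly the rewriting $\alpha = \beta \;\Leftrightarrow\; (\alpha \strContains \beta) \wedge (\beta \strContains \alpha)$ you carry out and justify in detail. Your treatment of the empty-string semantics and the unary-alphabet length argument are careful elaborations of what the paper leaves implicit, but they do not change the approach.
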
 

\begin{proof}
Equality can be expressed with two-way containment.
Once again, note that there is no negation in this fragment.
%%\qed
\end{proof}

\subsection{Equality + eXtract-with-constant-indices (E+X-Const)}

\begin{theorem} \label{th-eqext}

The satisfiability problem over the theory of strings
with equality, extract with constant indices, Boolean negation and conjunction (E+X-CONST fragment) is NP-hard.

\end{theorem}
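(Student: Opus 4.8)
The plan is to reduce 3-CNF-SAT to the E+X-CONST fragment, reusing the variable encoding of the previous theorems and supplying a new clause gadget built from constant-index extraction. I work over $\Sigma = \{a,b\}$ and, as before, set $\psi(x_i) \triangleq s_i$ and $\psi(\neg x_i) \triangleq r_i$, where $s_i,r_i$ are string variables declared with maximum length $1$. The intended reading is that a literal string is $a$ when the literal is true and $b$ when it is false. First I would force each of $s_i,r_i$ to be exactly one character by adding $s_i \neq \epsilon$ and $r_i \neq \epsilon$ (over $\Sigma$ this pins them into $\{a,b\}$), and force $x_i$ and $\neg x_i$ to disagree by the consistency constraint $s_i \neq r_i$, which for length-one strings over $\{a,b\}$ makes exactly one of them equal to $a$. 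All of these are equalities, disequalities (Boolean negation of equality), and conjunctions, so they live in E+X-CONST.

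The crux is the clause gadget, since this fragment has neither concatenation nor containment with which to express a disjunction. For a clause $c \equiv l_1 \vee l_2 \vee l_3$ I would introduce a fresh variable $V_c$ of maximum length $3$ and pack the three literal values into it, recovering them with constant-index extraction:
$$V_c[1:1] = \psi(l_1) \qquad V_c[2:2] = \psi(l_2) \qquad V_c[3:3] = \psi(l_3) \qquad V_c \neq bbb.$$
The three extraction equalities compel $V_c$ to carry a character in each of positions $1,2,3$, each lying in $\{a,b\}$ and equal to the corresponding $\psi(l_k)$; hence position $k$ is $a$ iff $l_k$ is true. The single disequality $V_c \neq bbb$ then holds iff some position is $a$, i.e. iff at least one literal of $c$ is true --- so one negated atom captures the entire clause disjunction, playing the role that $\strContainsAt{i}$ played in Theorem~\ref{th-eqcat}. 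Taking $\psi(\phi)$ to be the conjunction of all the variable and clause constraints completes the transformation, which is clearly polynomial in the size of $\phi$.

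I would then verify the two directions. For the forward direction, given a satisfying assignment $\mathcal I$ of $\phi$, I set $s_i=a, r_i=b$ when $\mathcal I(x_i)=\mathit{true}$ and $s_i=b, r_i=a$ otherwise, and let $V_c$ be the length-$3$ string whose $k$-th character is the value assigned to $\psi(l_k)$; since each clause has a true literal, $V_c \neq bbb$ holds, and all other constraints hold by construction. For the backward direction, from any string model I read off $x_i$ as true iff $s_i=a$; the constraints $s_i\neq\epsilon$, $r_i\neq\epsilon$, $s_i\neq r_i$ guarantee this assignment is well-defined and consistent, and the clause gadgets guarantee each clause is satisfied.

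The main obstacle will be ruling out vacuous satisfaction of the clause gadget by ill-formed strings: I must confirm that the extraction constraints, \emph{together with the length bound}, genuinely force $V_c$ into $\{a,b\}^3$. The three equalities only force length at least $3$; without the declared maximum length $3$, an over-long $V_c$ (say $bbbaa$) could satisfy $V_c \neq bbb$ while positions $1,2,3$ are all $b$, making an unsatisfiable $\phi$ map to a satisfiable constraint set. So the upper bound on $V_c$ is essential, and I would also fix the boundary behaviour of extract on empty and too-short strings --- treating extraction at a non-existent position as unsatisfiable and $s[k:k]$ as the genuine $k$-th character --- before declaring the reduction correct. Combined with Theorem~\ref{th-np}, the hardness established here yields NP-completeness of E+X-CONST.
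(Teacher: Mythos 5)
Your proposal is correct and follows essentially the same reduction as the paper: the identical literal encoding over $\Sigma = \{a,b\}$ with $s_i \neq \epsilon$, $r_i \neq \epsilon$, $s_i \neq r_i$, and the identical clause gadget $V_c[k:k] = \psi(l_k)$ for $k=1,2,3$ together with $V_c \neq bbb$ for a fresh variable $V_c$ of length at most $3$. Your added observation that the length-$3$ upper bound on $V_c$ is essential (to prevent an over-long string from vacuously satisfying $V_c \neq bbb$) is a point the paper leaves implicit, but it does not change the argument.
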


\begin{proof}
We prove this by reducing 3-CNF-SAT to E+X-CONST fragment of theory of strings. We describe a transformation that maps a 3-CNF Boolean formula to
a formula in E+X-CONST fragment of theory of strings (over the alphabet $\Sigma = \{a,b\}$) such that there is a satisfying
assignment for the Boolean formula if and only if there 
is a satisfying
assignment for the formula over strings.\\
Let $\psi$ be defined as 
$$ \psi(x_i) \triangleq s_i \; and \; \psi(\neg x_i) \triangleq r_i$$ 
where $s_i$ and $r_i$ are strings of atmost length $1$. To make it
exactly of length $1$, we  require
$s_i \not = \epsilon \wedge r_i \not = \epsilon$.
$s_i = a$ if and only if $x_i$ is assigned true, otherwise, it is $b$.
Similarly, $r_i = a$ if and only if $x_i$ is assigned false,
otherwise it is $b$. So, for any literal $l$, $\psi(l)$ would be
$a$ if and only if $l$ is assigned true. 

%Consider the fragment containing only of ``contains at'' (\strContainsAt[i])
%and equality operators, with negations allowed. This fragment is also
%NP hard.

We also need to add
constraints to ensure consistency, that is, exactly one of $x_i$ or
$\neg x_i$ is assigned true. For consistency, for each variable
$x_i$, we must have the constraint
$$s_i \not = r_i$$
This ensures that exactly one of $s_i$ or $r_i$ is $a$.

Each clause $c \equiv l_1 \vee l_2 \vee l_3$ is transformed to
$$V_c[1:1] = \psi(l_1)$$
$$V_c[2:2] = \psi(l_2)$$
$$V_c[3:3] = \psi(l_3)$$
$$V_c \not = bbb$$
where $V_c$ is a new variable for clause c and is of length atmost $3$.

Thus, atleast one of $\psi(l_i)$ must
be of $a$ which is possible if and only $l_i$ is assigned true. So,
atleast one literal in each clause is true.

A set of string constraints $\psi(\phi)$ 
is obtained by applying the above transformations
to each clause $c_i$ in 3-CNF Boolean formula $\phi$ 
and taking the union of all the obtained string constraints.

Let $\mathcal I$ be a satisfying assignment to $\phi$ such that
$\mathcal I(x)$ denotes the assignment to $x$. By construction, there is an
assignment $\mathcal I'$ to $\psi(\phi)$ such that $\mathcal I'(s_i) = a$ and
$\mathcal I'(r_i) = b$ if and only if $\mathcal I(x_i) = true$.

Thus, E+X-CONST fragment of string theory is NP-hard.

%%\qed

\end{proof}

We now show that even without equality, the fragment of the theory of strings having \emph{contains} as string predicate with Boolean negation and conjunction is also hard. This is the final result of the section.

\subsection{Containment (C)}

\begin{theorem} \label{th-cont}

The satisfiability problem over the theory of strings
with contains, Boolean negation and conjunction (C fragment) is NP-hard.

\end{theorem}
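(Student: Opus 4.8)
The plan is to reduce 3-CNF-SAT, following the template already used for Theorems~\ref{th-eqcont}, \ref{th-eqcat}, and \ref{th-eqext}, but now with only $\strContains$, negation, and conjunction available. The first observation is that the two derived operations used in those proofs are still expressible here. Equality is recoverable as two-way containment: $s = t$ is equivalent to $(s \strContains t) \wedge (t \strContains s)$, since mutual containment forces equal lengths and hence equality (the same observation used earlier for the \textbf{C+A} fragment). Likewise, disequality of two strings \emph{known to have the same length} is a single negated containment: if $|s| = |t|$ then $s \neq t$ is equivalent to $\neg(s \strContains t)$, exactly as in the proof of Corollary~\ref{th-cat}. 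So the strategy is to start from the E+X-CONST / E+T-CONST reduction (alphabet $\{a,b\}$, each variable represented by length-one tokens, each clause represented by a length-three string $V_c$ constrained so that $V_c \neq bbb$) and re-express every equality as a conjunction of two containments and every equal-length disequality as a single negated containment.

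The step I expect to carry over almost verbatim is the variable and consistency machinery: each literal is mapped by $\psi$ to a token confined to a small finite set by containment against a constant (for instance $ab \strContains s_i$ together with $\neg(s_i \strContains ab)$ pins $s_i$ to $\{\epsilon, a, b\}$), and the ``exactly one of $s_i, r_i$'' constraint becomes a negated containment once the two tokens share a common length. The clause constraint $V_c \neq bbb$, being a disequality between two length-three strings, likewise becomes the single literal $\neg(V_c \strContains bbb)$; because $|V_c| = 3$ this negated containment says precisely that $V_c$ differs from $bbb$ in at least one slot, which is exactly the three-way clause disjunction once each slot ranges over a two-letter ``true/false'' domain.

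The main obstacle is the clause gadget itself. In the E+T-CONST and E+X-CONST proofs the slots of $V_c$ were addressed positionally (via $\strContainsAt{j}$ or via $V_c[j:j]$), whereas plain containment is position-blind: $V_c \strContains a$ only reports that $a$ occurs \emph{somewhere}. My plan to recover positional resolution is to give each of the three slots a disjoint pair of letters, so that the presence of a slot's ``true'' or ``false'' letter is detectable by an ordinary containment, and to tie each slot, through a two-way-containment equality, to the per-variable truth token shared by all clauses in which that variable occurs; the all-false word is then a single contiguous constant whose exclusion by $\neg(V_c \strContains \cdot)$ encodes the clause. The two delicate points to discharge carefully are (i) forcing $V_c$ into the intended three-slot shape and enforcing cross-clause agreement of each variable's token using only conjunctions of positive and negated containments, and (ii) the empty-string corner case, since everything contains $\epsilon$ and $\epsilon$ can therefore never be excluded by a negated containment. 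I would handle (ii) not by forbidding $\epsilon$ (which would need an unavailable disjunction) but by permitting a token the spurious value $\epsilon$ and arguing monotonicity: replacing each $\epsilon$ token by a definite truth value only turns additional literals true and hence preserves satisfaction of every clause, while the forward direction is immediate from the construction. Establishing (i) cleanly, without smuggling in a disjunction, is where I expect the real work to lie.
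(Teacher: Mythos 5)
You have not actually produced a proof: every step you do carry out (equality as two-way containment, same-length disequality as one negated containment, pinning tokens to a finite set, rewriting $V_c \neq bbb$ as $\neg(V_c \strContains bbb)$) is peripheral, and the one construction that carries the reduction --- the clause gadget under a position-blind $\strContains$ --- is explicitly left open (``where I expect the real work to lie''). This gap is not a technicality. Once containment is position-blind, the transplanted constraints $V_c \strContains \psi(l_1)$, $V_c \strContains \psi(l_2)$, $V_c \strContains \psi(l_3)$ together with ``$V_c$ is not (contained in / equal to) $bbb$'' do not encode the disjunction at all: the assignment $V_c = ab$ satisfies all of them \emph{regardless} of the tokens' values, since $ab$ contains both $a$ and $b$ and is neither equal to nor a substring of $bbb$. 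So a reduction built that way maps unsatisfiable 3-CNF formulas to satisfiable string formulas. It is worth saying plainly that the paper's own proof of this theorem uses exactly that naive gadget, so its ``string-satisfiable implies Boolean-satisfiable'' direction fails on the same counterexample; your point (i) is precisely where the published argument breaks, which makes leaving it open all the more consequential.

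Your sketched repair also needs more than you indicate. With disjoint letter pairs $\{a_j,b_j\}$ per slot, a single per-variable token shared across clauses cannot be \emph{equal} (via two-way containment) to slots of different clauses, because the same Boolean variable occupies different slots in different clauses and the alphabets are disjoint; and the tie ``if $a_j$ occurs in $V_c$ then the token is true'' is an implication between atoms, which a pure conjunction of (negated) containment literals cannot state. Both obstacles are surmountable with \emph{menu} gadgets: pin an auxiliary variable to the substrings of a fixed constant (e.g., the seven words $y_1y_2y_3$ with $y_j \in \{a_j,b_j\}$, all except $b_1b_2b_3$, written with separators), strike out the short substrings with negated containments, and then a positive containment $V_c \strContains t_{c,j}$ of a per-occurrence token says exactly that slot $j$ agrees with that token; cross-clause agreement needs analogous two-word menu variables. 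Alternatively, and far more simply, you can abandon the 3-SAT template altogether: over a three-letter alphabet, for length-one nonempty variables (forced by $\neg(\epsilon \strContains v)$) the literal $\neg(v_u \strContains v_w)$ is exactly $v_u \neq v_w$, so graph $3$-colorability reduces to the C fragment in one line. Either route closes the gap; as written, your proposal does not.
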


\begin{proof}
We prove this by reducing 3-CNF-SAT to C fragment of theory of strings.
We describe a transformation that maps a 3-CNF Boolean formula to
a formula in C fragment of theory of strings (over the alphabet $\Sigma = \{a,b\}$) such that there is a satisfying
assignment for the Boolean formula if and only if there is a satisfying
assignment for the formula over strings.\\
Let $\psi$ be defined as 
$$ \psi(x_i) \triangleq s_i \; and \; \psi(\neg x_i) \triangleq r_i$$ 
where $s_i$ and $r_i$ are strings of atmost length $1$. To make it
exactly of length $1$, we  require
$\neg (\epsilon \strContains s_i) \wedge \neg (\epsilon \strContains r_i)$.
$s_i = a$ if and only if $x_i$ is assigned true, otherwise, it is $b$.
Similarly, $r_i = a$ if and only if $x_i$ is assigned false,
otherwise it is $b$. So, for any literal $l$, $\psi(l)$ would be
$a$ if and only if $l$ is assigned true. 

%Consider the fragment containing only of ``contains at'' (\strContainsAt[i])
%and equality operators, with negations allowed. This fragment is also
%NP hard.

We also need to add
constraints to ensure consistency, that is, exactly one of $x_i$ or
$\neg x_i$ is assigned true. For consistency, for each variable
$x_i$, we must have the constraint
$$\neg(s_i \strContains r_i)$$

Each clause $c \equiv l_1 \vee l_2 \vee l_3$ is transformed to
$$V_c \strContains \psi(l_1)$$
$$V_c \strContains \psi(l_2)$$
$$V_c \strContains \psi(l_3)$$
$$\neg (bbb \strContains V_c)$$
where $V_c$ is a new variable for clause c and is of length atmost $3$.

Thus, atleast one of $\psi(l_i)$ must
be of $a$ which is possible if and only $l_i$ is assigned true. So,
atleast one literal in each clause is true.

A set of string constraints $\psi(\phi)$ 
is obtained by applying the above transformations
to each clause $c_i$ in 3-CNF Boolean formula $\phi$ 
and taking the union of all the obtained string constraints.

Let $\mathcal I$ be a satisfying assignment to $\phi$ such that
$\mathcal I(x)$ denotes the assignment to $x$. By construction, there is an
assignment $\mathcal I'$ to $\psi(\phi)$ such that $\mathcal I'(s_i) = a$ and
$\mathcal I'(r_i) = b$ if and only if $\mathcal I(x_i) = true$.

Thus, C fragment of string theory is NP-hard.
%%\qed 
\end{proof}

%%%%%%%%%%%%%%%%%%%%%%%%%%%%%%%%%%%%%%%%%%%%%%%%%%%%%%%%%%%%%%%%%%

%\newpage

%\input{algo}

%\input{disc}

\section{Conclusion and Future Work}

The analysis of different fragments of the theory of strings presented in this paper 
shows that the satisfiability problem for even small non-trivial fragments
is NP-complete. Thus, it is unlikely that an efficient (polynomial-time)
algorithm for checking the satisfiability of the strings would be found. 
Hence, a simple approach based on Boolean encoding of string constraints 
to propositional logic is, in principle, as effective as any other technique  
for solving string constraints.
This justifies a ``byte-blast'' approach to solving string 
constraints which relies on encoding strings as bit-vectors and using 
an off-the-shelf bit-vector SMT solver.  Further, these hardness results 
underline the importance of using domain knowledge about string constraints 
arising out of security applications. We believe, in practice, 
word-level reasoning over strings that exploits such domain knowledge 
through pragmatic approaches such as abstraction-refinement might 
prove to be very effective in making an efficient and scalable for 
theory of strings. The key challenge in developing an SMT solver for 
theory of strings is identification of such properties of string 
constraints arising from real code.

Inspired by the success of abstraction-refinement based approaches for
SMT solving (e.g.,~\cite{kroening-cav04,bryant-tacas07,ganesh-cav07}),
we believe such an approach would be useful for the theory of
strings also.
%Such an approach relies on iterative construction of over-approximation and
%under-approximation of the formulae till the formulae can be shown
%to be satisfiable (using the under-approximation) or unsatisfiable (using
%the over-approximation). 
We identify the abstraction techniques that we believe would be
especially useful in the context of a theory of strings:
\begin{mylist} 
\item {\em Length abstraction:} 
To our knowledge, this approach has been first published by 
Bjorner et al~\cite{bjorner-tr08}. It operates
by creating an over-approximation of the actual formula by
abstracting each string constraint with a corresponding length constraint. The
resulting integer linear arithmetic 
formula is solved to obtain candidate lengths for the
strings in the original formula, with a possible refinement needed
if these candidate lengths turn out to be too small.
We believe that this general idea can be used but with some guidance
to the solver to not simply generate the smallest lengths.

\item {\em Position abstraction:} 
We have observed that, in the security applications of interest, 
\emph{string-containment} is a widely used predicate and the
encoding the choice of position of containment adds significant
complexity to 
the constraint satisfaction problem. For large string-lengths, 
a standard byte-blast approach which
reduces the string constraints to bit-vector formula would require the SAT
solver to branch over a large set of choices of positions. We hypothesize based on
our observations of string constraints generated by
colleagues in security applications~\cite{juan-tr09}, 
that the position and order of containment of substrings is often not critical to finding
a satisfying assigment. Hence, an effective approach to construct
under-approximation of the string formula would be fixing some heuristic ordering
of containment constraints. If the formula with this fixed ordering is
unsatisfiable, the unsat core generated by the SAT solver can be used
to selectively refine the ordering.

\end{mylist}
The overall approach we envisage will be similar to the 
iterative construction of over- and under-approximate formulas as
performed in prior work on model checking~\cite{mcmillan-tacas03} and
SMT solving for bit-vector arithmetic~\cite{bryant-tacas07}.
%to be satisfiable (using the under-approximation) or unsatisfiable (using
%the over-approximation). 
It would be interesting to evaluate how such an approach based on abstraction-refinement
performs for string formulas generated in practice from security applications.

\section*{Acknowledgments}

We are grateful to Juan Caballero and Dawn Song for many helpful
discussions on formalizing fragments of the theory of strings that are
relevant for software security.

\bibliographystyle{abbrv}   % or "unsrt", "alpha", "abbrv", etc.
\bibliography{strings}       % use data in file "biblio.bib"

\end{document}